\documentclass[10pt,conference]{IEEEtran}\IEEEoverridecommandlockouts
\usepackage{amsmath,amsfonts}
\usepackage{array}
\usepackage{textcomp}
\usepackage{stfloats}
\usepackage{url}
\usepackage{verbatim}
\usepackage{graphicx}
\usepackage{subcaption}
\usepackage{cite}
\usepackage{verbatim}
\usepackage{booktabs}
\usepackage{algorithm}  
\usepackage{algpseudocode}  
\usepackage{amssymb}
\usepackage{bm}
\usepackage{mathrsfs}
\usepackage{float}
\usepackage{subfig}
\usepackage{overpic}
\usepackage{soul}
\usepackage{color}
\usepackage{amsthm}

\usepackage{siunitx}
\usepackage{caption}    
\captionsetup[table]{skip=6pt, font=small, labelfont=bf}

\usepackage{breqn}

\newtheorem{theorem}{Theorem}[section]

\usepackage[letterpaper, top=0.7in, bottom=0.99in, left=0.65in, right=0.65in]{geometry}

\usepackage[numbers,compress]{natbib}

\usepackage{braket}

\columnsep 0.2in

\hyphenation{op-tical net-works semi-conduc-tor IEEE-Xplore}

\begin{document}

    \title{Channel-Robust RFF for Low-Latency 5G Device Identification in SIMO Scenarios}
    
\author{
    \IEEEauthorblockN{
    Yingjie Sun$^1$, Guyue Li$^{1,2}$, Hongfu Chou$^3$ and Aiqun Hu$^{2,4}$
    }
    
    \IEEEauthorblockA{
    $1$ School of Cyber Science and Engineering, Southeast University, Nanjing, China
    }
    \IEEEauthorblockA{
    $2$ Purple Mountain Laboratories, Nanjing, China
    }
    \IEEEauthorblockA{
    $3$ Interdisciplinary Centre for Security, Reliability and Trust (SnT), University of Luxembourg, Luxembourg
    }
    \IEEEauthorblockA{
    $4$ National Mobile Communications Research Laboratory, Southeast University, Nanjing, China
    }
    \IEEEauthorblockA{
    Email: \{sun\_yingjie, guyuelee\}@seu.edu.cn, hungpu.chou@uni.lu, aqhu@seu.edu.cn
    }
}



\maketitle
\pagestyle{empty}  
\thispagestyle{empty} 

\begin{abstract}
Ultra-low latency, the hallmark of fifth-generation mobile communications (5G), imposes exacting timing demands on identification as well. 
Current cryptographic solutions introduce additional computational overhead, which results in heightened identification delays. Radio frequency fingerprint (RFF) identifies devices at the physical layer, blocking impersonation attacks while significantly reducing latency. Unfortunately, multipath channels compromise RFF accuracy, and existing channel-resilient methods demand feedback or processing across multiple time points, incurring extra signaling latency. 
To address this problem, the paper introduces a new RFF extraction technique that employs signals from multiple receiving antennas to address multipath issues without adding latency. Unlike single-domain methods, the Log-Linear Delta Ratio (LLDR) of co-temporal channel frequency responses (CFRs) from multiple antennas is employed to preserve discriminative RFF features, eliminating multi-time sampling and reducing acquisition time. To overcome the challenge of the reliance on minimal channel variation, the frequency band is segmented into sub-bands, and the LLDR is computed within each sub-band individually. 
Simulation results indicate that the proposed scheme attains a 96.13\% identification accuracy for 30 user equipments (UEs) within a 20-path channel under a signal-to-noise ratio (SNR) of 20 dB. Furthermore, we evaluate the theoretical latency using the Roofline model, resulting in the air interface latency of 0.491~ms, which satisfies ultra-reliable and low-latency communications (URLLC) latency requirements.

\end{abstract}

\begin{IEEEkeywords}
Radio frequency fingerprint, log-linear delta ratio, multipath channel, device identification.
\end{IEEEkeywords}

\section{Introduction}

The explosive growth of wireless devices, coupled with the open nature of the radio channel
, has rendered conventional cryptographic identification unwieldy under stringent key-management and latency constraints \cite{PATWARY2020212}. This drives demand for a low-latency, tamper-proof, and highly secure scheme. As a remedy, radio frequency fingerprint (RFF) exploits device-unique distortions at the physical layer \cite{HD2020} introduced by manufacturing tolerances and environmental conditions to enable low-latency, pre-transmission identification \cite{10978824}. However, multipath propagation compromises feature fidelity, challenging the reliability required for accurate identification \cite{9155259}.

Current research on alleviating the impact of multipath channels on RFF is predominantly concentrated on the following two aspects:
\begin{itemize}
    \item \textbf{Channel elimination based on feedback:}
    The DeepCSI method in \cite{9912203} employs deep learning of beamforming matrices to extract device-specific features from channel state information (CSI) feedback, leveraging imperfections in the transmitter's RF circuitry embedded in plaintext beamforming matrices.
    \item \textbf{Channel elimination based on adjacent symbol spectrum:}
    The DoLoS method in \cite{9979789} extracts channel-independent RFF features by leveraging the logarithmic spectral differences between adjacent signal symbols transmitted within the channel coherence time, during which the channel is assumed constant.
\end{itemize}

However, feedback-based channel elimination requires multiple interactions that increase identification latency, making it unsuitable for large-scale device identification scenarios. Moreover, practical RFF differences between consecutive symbols remain negligible; thus, the logarithmic spectral differencing used in this approach risks significant RFF feature degradation.



In the realm of low-latency RFF, the open-set RFF in \cite{electronics13020384} prunes the exemplar set to the most discriminative features, cutting identification latency by 58\% successfully yet marginally degrading accuracy, and omits verification of whether the resulting air interface delay meets ultra-reliable and low-latency communications (URLLC) requirements.

Inspired by \cite{10965708}, which solely exploits signals acquired by a single antenna across disparate temporal instances, we extend our focus to multi-antenna systems, leveraging the received signals of different antennas to eliminate the impact of multipath channel on RFF. This approach enables suppression of multipath effects using only a single uplink transmission, while preserving more transmitter-specific RFF features.


In summary, we present a novel RFF extraction scheme for single-input multiple-output (SIMO) systems that leverages multi-antenna signals. Our main contributions are summarized as follows.

\begin{itemize}
    \item We propose a novel RFF scheme that utilizes the received signals from multiple antennas to mitigate the impact of the multipath channel, requiring only a single uplink transmission and thereby reducing identification latency. It can be theoretically demonstrated that this RFF method is robust to multipath channels.
    \item Unlike traditional signal processing in a single domain, we calculate the Log-Linear Delta Ratio (LLDR) between the estimated channel frequency responses (CFRs). To address CFR variability, we partition the frequency band and extract RFF features within each smooth sub-band.
    \item We simulate different channel conditions using the MATLAB 5G NR Toolbox for 30 virtual user equipments (UEs) and different signal-to-noise ratio (SNR) levels. Simulation results show that the proposed scheme can achieve 96.13\% accuracy under the tapped delay line (TDL) 20-path channel with an SNR of 20~dB and 0.491~ms air interface latency, satisfying the URLLC latency requirements.
\end{itemize}

\section{System Model and Problem Statement}\label{Sec:II}

In this section, we construct the system model and signal model of the proposed RFF method, formulate the problem statement and analyze the challenges faced by RFF extraction.

\subsection{System Model}

We propose a generalized communication model for SIMO systems, as shown in Fig.~\ref{fig:system_model}. The system consists of a single-antenna UE and a multi-antenna base station (BS). The UE sends a pilot signal to the BS for identification. Subsequent device identification is performed at the BS by utilizing the received signals at different antennas.

\begin{figure}[h]
    \centering  
    \includegraphics[width=0.8\linewidth]{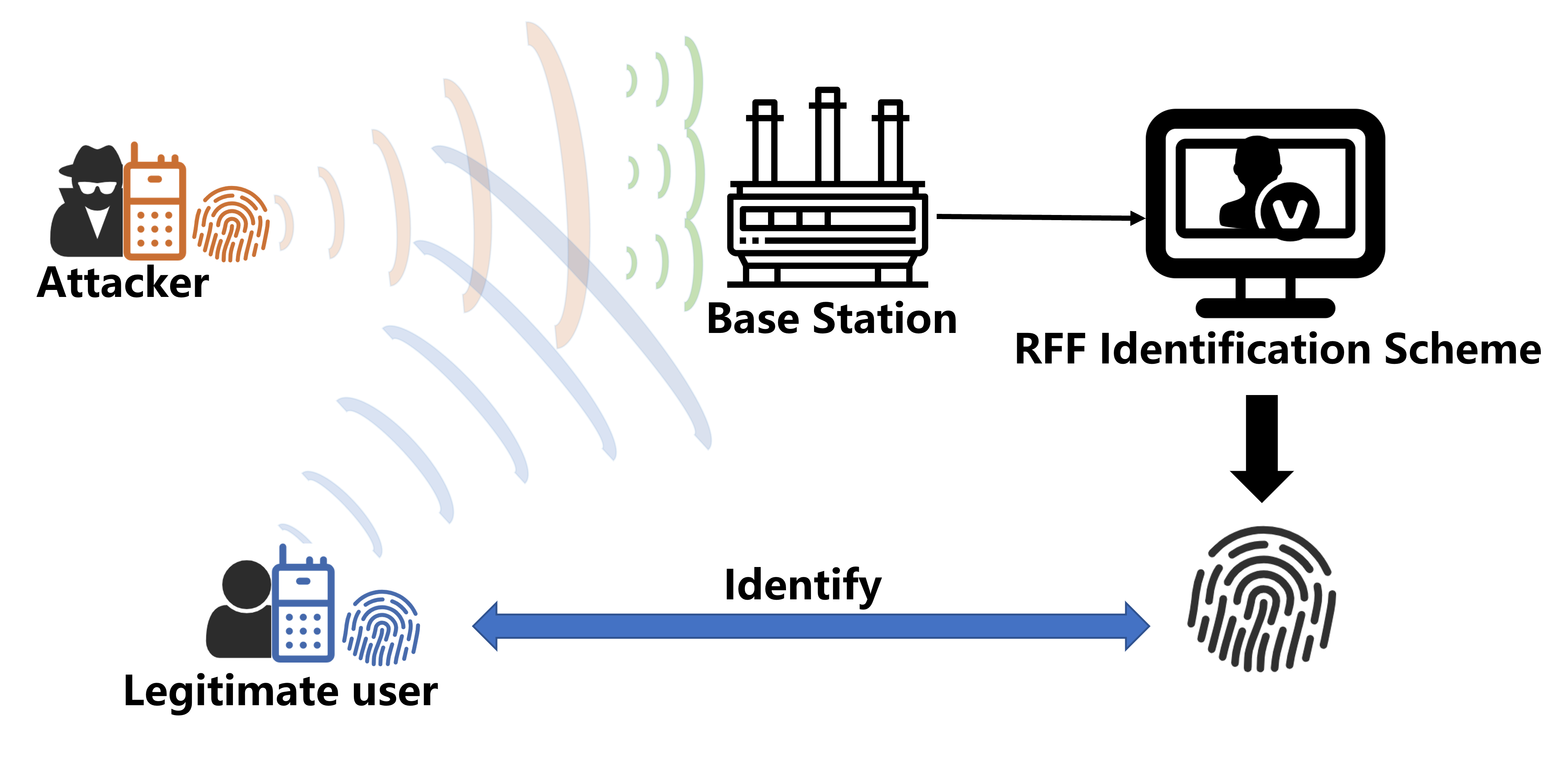}
    \caption{System model.}
    \label{fig:system_model}
\end{figure}


To streamline exposition, the algorithm is derived under the baseline scenario of a two-antenna BS, wherein the RFF is recovered exclusively from the signals captured at receive antennas 1 and 2. In the frequency domain, the received signals of the two receive antennas are respectively given by
\begin{align}
    Y_1(k) &= R_1(k) \cdot H_1(k) \cdot T(k) \cdot X(k) + N_1(k), \label{eq:1} \\
    Y_2(k) &= R_2(k) \cdot H_2(k) \cdot T(k) \cdot X(k) + N_2(k), \label{eq:2}
\end{align}
where $k$ denotes the index of the subcarrier, $X(k)$ represents the pilot signal transmitted by UE, $Y_1 (k)$ and $Y_2 (k)$ are the received signals on receive antennas 1 and 2, respectively. $T(k)$ denotes the transmission defect of the UE, while $R_1 (k)$ and $R_2 (k)$ represent the reception defects of receive antennas 1 and 2, respectively. $H_1 (k)$ and $H_2 (k)$ are the CFRs from the UE to receive antennas 1 and 2, respectively. $N_1 (k)$ and $N_2 (k)$ are modeled as independent and identically distributed additive white Gaussian noise.

Considering that communication is conducted in a 5G system, we refer to the RFF modeling in \cite{10279540} incorporating the following hardware-specific characteristics:
\begin{enumerate}
    \item \textbf{In-phase/quadrature (IQ) imbalance:} 
    Each RF chain contains independent IQ mismatches with gain imbalance $\alpha \in [-1,1]$ dB and phase offset $\beta \in [-5,5]^\circ$ \cite{9450821};
    \item \textbf{Oscillator:}
    All antennas of the same device share a common oscillator, resulting in identical carrier frequency offset (CFO) and common phase offset (CPO);
    \item \textbf{Power amplifier (PA) nonlinearity:}
    A generalized memory polynomial model with memory depth 5 and nonlinear order 5 is adopted, whose coefficients $\{a_{mk}\}$ are perturbed within $\pm5\%$ of measured values \cite{1703853}.
\end{enumerate}
The parameter set $\Omega = \{\alpha, \beta, \text{CFO}, \text{CPO}, \{a_{mk}\}\}$ uniquely defines device-specific RFF for $T(k)$, $R_1 (k)$ and $R_2 (k)$.

The channel between BS and UE is modeled as a TDL multipath channel \cite{3GPPTR38901}, with its frequency-domain representation given by
\begin{equation}\label{eq:3}
    H(k) = \sum_{i=0}^{L-1} a_i e^{-j2\pi k \Delta f \tau_i},
\end{equation}
where \( L \) denotes the number of paths, \( a_i \) and \( \tau_i \) represent the path gain and delay of the \( i \)-th path, respectively, and \( \Delta f \) is the subcarrier spacing.
Consequently, $H_1 (k)$ and $H_2 (k)$ are likewise modeled as independent TDL channels.

On the basis of the aforementioned modeling, the channel estimation results obtained based on the received signals of the two receive antennas are respectively given by
\begin{align}
Z_1 (k) &\approx Y_1 (k) \cdot X^{-1} (k) \approx R_1 (k) \cdot H_1 (k) \cdot T(k), \label{eq:4} \\
Z_2 (k) &\approx Y_2 (k) \cdot X^{-1} (k) \approx R_2 (k) \cdot H_2 (k) \cdot T(k), \label{eq:5}
\end{align}

\subsection{Problem Statement}



Our objective is to extract channel-robust RFF from the received signal. As indicated by \eqref{eq:4} and \eqref{eq:5}, the channel estimation results incorporate the transmitter's RFF \( T(k) \). Although \( T(k) \) is theoretically extractable, the superposition of \( H_1(k) \) and \( H_2(k) \) onto \( T(k) \) fundamentally precludes direct acquisition of stable RFF features required for reliable identification.

Consequently, the goal of our proposed method is to mitigate the impact of \( H_1(k) \) and \( H_2(k) \) to obtain a stable RFF feature only through a single uplink transmission.

\section{Channel-Robust and Low-Latency RFF Extraction}\label{Sec:III}


In this section, we propose a multi-antenna channel-robust RFF scheme that utilizes the received signals from multiple antennas, delineating the feature extraction procedure and conducting a theoretical analysis of the method's performance.

\subsection{Preliminary}


The proposed RFF extraction method partitions the frequency band to ensure CFR smoothness and computes the LLDR of CFRs within each sub-band while compensating for the disparity between receive antennas. The subsequent section formalizes this process after presenting preliminary knowledge.


\subsubsection{Frequency Band Segmentation}


To address the significant fluctuations of CFR, the entire frequency band is divided into smaller sub-bands across subcarriers. Specifically, the CFR within each sub-band can be represented as
\begin{align}
H_{1}(k) &= \bar{H}_{1}(k) + \Delta H_{1}(k), \label{eq:6} \\
H_{2}(k) &= \bar{H}_{2}(k) + \Delta H_{2}(k), \label{eq:7}
\end{align}
where $\bar{H}_{1}(k)$ and $\bar{H}_{2}(k)$ are the mean values of the CFR within the sub-bands, and $\Delta H_{1}(k)$ and $\Delta H_{2}(k)$ are vanishingly small values that characterize the deviation from its mean.


\subsubsection{Approximation Based on Taylor Expansion}


A value $x$ fluctuating around the constant value of 1 can be represented as $x = 1 + \Delta x$, where $\Delta x$ is a vanishingly small value approximated as $\Delta x \to 0$. Based on the first-order Taylor expansion, the natural logarithm of $x$ can be expanded as
\begin{equation}\label{eq:8}
    \ln x = \ln(1 + \Delta x) = \Delta x + e(\Delta x).
\end{equation}
where
$e(\Delta x) = -\frac{1}{2} (\Delta x)^2 + \cdots + \frac{(-1)^{n-1}}{n} (\Delta x)^n + O((\Delta x)^{n+1}).$
When $\Delta x \to 0$, $e(\Delta x)$ can be considered as a higher-order infinitesimal of $\Delta x$ and thus can be neglected. In other words, $\ln x$ can be approximated as
\begin{equation}\label{eq:9}
    \ln x \approx \Delta x.
\end{equation}

\subsubsection{Compensation for the Disparity between Receive Antennas}


Let the relative impairment between receive antenna 1 and receive antenna 2 be defined as
$r = \frac{R_{2}(k)}{R_{1}(k)}.$
Then, the compensated linear and logarithmic differences can be expressed as
\begin{align}
D_{12}(k) &= Z_{1}(k) - \frac{1}{r} Z_{2}(k) \notag \\
&= R_{1}(k) \cdot (H_{1}(k) - H_{2}(k)) \cdot T(k), \label{eq:10} \\
L_{12}(k) &= \ln r + \ln \frac{Z_{1}(k)}{Z_{2}(k)} \notag \\
&= \ln H_{1}(k) - \ln H_{2}(k). \label{eq:11}
\end{align}
At this point, the results have been compensated for receive antenna impairments and can be employed to extract RFF.

\subsection{RFF Extraction Procedure}

Algorithm~\ref{alg:ext} outlines the RFF extraction process: BS conducts dual-antenna channel estimation, calculates sub-band means as CFR estimates, and normalizes channel estimations accordingly. Then LLDR is computed with inter-antenna imperfection compensation.

\begin{algorithm}
\caption{The Extraction of UE's RFF}
\label{alg:ext}
\begin{algorithmic}[1]
    \Require  
        \parbox[t]{\dimexpr\linewidth-\algorithmicindent-\algorithmicindent-\algorithmicindent}{\( X(k) \), \( Y_1(k) \) , \( Y_2(k) \), and the relative defect of the receive antennas \( r \)}
    \Ensure  
        The RFF of UE, \( \hat{T}_0(k) \)
    \State BS estimates channels \( Z_1(k) \approx Y_1(k) \cdot X^{-1}(k) \) and \( Z_2(k) \approx Y_2(k) \cdot X^{-1}(k) \), respectively;

    \For{each sub-band \( (k_i, k_{i+1}) \)}
        \State \parbox[t]{\dimexpr\linewidth-\algorithmicindent-\algorithmicindent-\algorithmicindent}{The mean values of the channel estimation results are calculated as \( \hat{a}_1 \) and \( \hat{a}_2 \);}
        \State \( Z_{1,\text{norm}}(k) = \frac{Z_1(k)}{\hat{a}_1} \); 
        \State \( Z_{2,\text{norm}}(k) = \frac{Z_2(k)}{\hat{a}_2} \);
        \State \( D_{12}(k) = Z_{1,\text{norm}}(k) - \frac{1}{r} Z_{2,\text{norm}}(k) \);
        \State \( L_{12}(k) = \ln r + \ln \frac{Z_{1,\text{norm}}(k)}{Z_{2,\text{norm}}(k)} \);
        \State \( \hat{T}_0(k) = \frac{D_{12}(k)}{L_{12}(k)} \);
    \EndFor
    \State The resulting RFF estimates \( \hat{T}_0(k) \) are concatenated to complete the UE's RFF;
    \State \textbf{return} \( \hat{T}_0(k) \);
\end{algorithmic}
\end{algorithm}


In the following, the focus will be on the computation within a specific sub-band, denoted as $(k_i, k_{i+1})$.


The mean values of the CFRs within the sub-band are considered as the mean channel estimations \( \hat{a}_1 \) and \( \hat{a}_2 \). Analogous to \eqref{eq:6} and \eqref{eq:7}, the CFRs can be expressed as
\begin{align}
H_1(k) &= \hat{a}_1 + \Delta H_1(k), \label{eq:12} \\
H_2(k) &= \hat{a}_2 + \Delta H_2(k), \label{eq:13}
\end{align}
where \( \Delta H_1(k) \) and \( \Delta H_2(k) \) are vanishingly small values.

By normalizing \eqref{eq:4} and \eqref{eq:5} using \( \hat{a}_1 \) and \( \hat{a}_2 \), respectively, we obtain
\begin{align}
Z_{1,\text{norm}}(k) &= \frac{Z_1(k)}{\hat{a}_1} \approx R_1(k) \cdot \frac{H_1(k)}{\hat{a}_1} \cdot T(k), \label{eq:14} \\
Z_{2,\text{norm}}(k) &= \frac{Z_2(k)}{\hat{a}_2} \approx R_2(k) \cdot \frac{H_2(k)}{\hat{a}_2} \cdot T(k), \label{eq:15}
\end{align}

According to \eqref{eq:10} and \eqref{eq:11}, to compensate for inter-antenna imbalance, the linear and logarithmic differences of \( Z_{1,\text{norm}}(k) \) and \( Z_{2,\text{norm}}(k) \) are constructed as follows:
\begin{align}
D_{12}(k) &= Z_{1,\text{norm}}(k) - \frac{1}{r} Z_{2,\text{norm}}(k), \label{eq:16} \\
L_{12}(k) &= \ln r + \ln \frac{Z_{1,\text{norm}}(k)}{Z_{2,\text{norm}}(k)}. \label{eq:17}
\end{align}

We define \( T_0(k) = R_1(k) \cdot T(k) \). By taking the ratio of \eqref{eq:16} and \eqref{eq:17}, the RFF estimate of UE within the sub-band $(k_i, k_{i+1})$ is given by
\begin{equation}\label{eq:18}
\hat{T}_0(k) = \frac{D_{12}(k)}{L_{12}(k)}.
\end{equation}

\begin{theorem}
    The RFF estimate $\hat{T}_0(k)$ is approximately equal to the true value of RFF $T_0(k)$, that is,
    \begin{equation}\label{eq:19}
        \hat{T}_0(k) \approx T_0(k).
    \end{equation}
\end{theorem}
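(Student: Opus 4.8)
The plan is to substitute the sub-band decomposition of the CFRs into the normalized channel estimates and show that both the numerator $D_{12}(k)$ and the denominator $L_{12}(k)$ of \eqref{eq:18} factor through the same common term, so that their ratio collapses to $T_0(k)$. First I would rewrite \eqref{eq:12} and \eqref{eq:13} as $H_1(k)/\hat{a}_1 = 1 + \epsilon_1(k)$ and $H_2(k)/\hat{a}_2 = 1 + \epsilon_2(k)$, where $\epsilon_1(k) = \Delta H_1(k)/\hat{a}_1$ and $\epsilon_2(k) = \Delta H_2(k)/\hat{a}_2$ are vanishingly small by the sub-band segmentation. Substituting into \eqref{eq:14} and \eqref{eq:15} gives $Z_{1,\text{norm}}(k) \approx T_0(k)(1+\epsilon_1(k))$ and $Z_{2,\text{norm}}(k) \approx R_2(k)T(k)(1+\epsilon_2(k))$, with $T_0(k) = R_1(k)T(k)$.

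Next I would evaluate the numerator. Inserting these expressions into \eqref{eq:16} and using the definition $r = R_2(k)/R_1(k)$, the factor $\tfrac{1}{r}R_2(k)$ reduces exactly to $R_1(k)$, so the two products carry the identical prefactor $T_0(k)$. The constant terms then cancel and one obtains $D_{12}(k) = T_0(k)\,[\epsilon_1(k) - \epsilon_2(k)]$; note this step is exact and requires no approximation. For the denominator I would form the ratio $Z_{1,\text{norm}}(k)/Z_{2,\text{norm}}(k) = \tfrac{1}{r}\,\tfrac{1+\epsilon_1(k)}{1+\epsilon_2(k)}$, whose logarithm contributes a $-\ln r$ that cancels the $+\ln r$ added in \eqref{eq:17}. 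What remains is $L_{12}(k) = \ln(1+\epsilon_1(k)) - \ln(1+\epsilon_2(k))$, and applying the first-order expansion \eqref{eq:9} yields $L_{12}(k) \approx \epsilon_1(k) - \epsilon_2(k)$. Taking the ratio \eqref{eq:18}, the common factor $\epsilon_1(k) - \epsilon_2(k)$ cancels, leaving $\hat{T}_0(k) \approx T_0(k)$.

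The main obstacle is justifying that the Taylor approximation in the denominator does not distort the result. The residual $e(\epsilon_1) - e(\epsilon_2)$ is second order, so $L_{12}(k) = (\epsilon_1 - \epsilon_2)[1 + O(\epsilon)]$, and the relative error carried into $\hat{T}_0(k)$ is therefore of order $\epsilon_1 + \epsilon_2$; I would argue this vanishes precisely because the sub-band partition forces $\Delta H_1, \Delta H_2$---and hence $\epsilon_1, \epsilon_2$---toward zero, which is the whole purpose of the segmentation step. A secondary subtlety to flag is that the cancellation implicitly requires $\epsilon_1(k) \neq \epsilon_2(k)$ so that $L_{12}(k)$ does not vanish; since the two antennas observe independent TDL channels in the sense of \eqref{eq:3}, the deviations $\Delta H_1$ and $\Delta H_2$ differ generically, and the ratio remains well defined.
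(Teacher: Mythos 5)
Your proposal is correct and follows essentially the same route as the paper's proof: normalize the CFRs to $1+\Delta H'$ form, show $D_{12}(k)=T_0(k)(\Delta H_1'-\Delta H_2')$ exactly while the $\ln r$ terms cancel in $L_{12}(k)$, apply the first-order logarithm expansion, and take the ratio. Your added remarks on the second-order residual and on the need for $\epsilon_1(k)\neq\epsilon_2(k)$ so that $L_{12}(k)$ does not vanish are reasonable refinements the paper leaves implicit, but they do not change the argument.
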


\begin{proof}
    The normalized channels can be approximated as
    \begin{align}
        H_{1,\text{norm}}(k) &= \frac{H_1(k)}{\hat{a}_1} = 1 + \Delta H_1'(k), \label{eq:20} \\
        H_{2,\text{norm}}(k) &= \frac{H_2(k)}{\hat{a}_2} = 1 + \Delta H_2'(k). \label{eq:21}
    \end{align}
    where \( \Delta H_1'(k) \) and \( \Delta H_2'(k) \) are also vanishingly small values.

    Thus, \eqref{eq:16} and \eqref{eq:17} can be written as
    \begin{align}
        D_{12}(k) &= T_0(k) \cdot \left( \Delta H_1'(k) - \Delta H_2'(k) \right), \label{eq:24} \\
        L_{12}(k) &= \ln \left(1 + \Delta H_1'(k)\right) - \ln \left(1 + \Delta H_2'(k)\right). \label{eq:25}
    \end{align}
    
    Given that \( \Delta H_1'(k) \) and \( \Delta H_2'(k) \) are vanishingly small, according to \eqref{eq:9}, the logarithmic difference of \( Z_{1,\text{norm}}(k) \) and \( Z_{2,\text{norm}}(k) \) can be approximated as
    \begin{equation}\label{eq:28}
        L_{12}(k) \approx \Delta H_1'(k) - \Delta H_2'(k).
    \end{equation}
    
    Hence, by taking the ratio of \eqref{eq:24} and \eqref{eq:28}, we can prove that
    \begin{equation}\label{eq:29}
        \hat{T}_0(k) = \frac{D_{12}(k)}{L_{12}(k)} \approx T_0(k).
    \end{equation}
\end{proof}


Repeating the procedure for each sub-band and concatenating the estimates \(\hat{T}_0(k)\) yields a channel-robust RFF, since \(\hat{T}_0(k)\) is observed independent of \(H_1(k)\) and \(H_2(k)\) from \eqref{eq:29}, indicating the successful mitigation of the multipath channel.




\subsection{Impact of Channel and Noise on RFF Estimation Accuracy}


In the algorithm description, the results are considered under theoretical assumptions. However, in practical wireless environments, the influence of additive noise becomes significant and cannot be neglected. 

The TDL channels \( H_1(k) \) and \( H_2(k) \) are modeled as independent and identically distributed complex Gaussian variables:
$
H_1(k) \sim \mathcal{CN}(0, \sigma_H^2), \quad H_2(k) \sim \mathcal{CN}(0, \sigma_H^2),
$
and the Gaussian noises \( N_1(k) \) and \( N_2(k) \) are similarly modeled:
$
N_1(k) \sim \mathcal{CN}(0, \sigma_N^2), \quad N_2(k) \sim \mathcal{CN}(0, \sigma_N^2).
$

By recalculating the RFF estimate and estimating the variance, we obtain
\begin{equation}\label{eq:31}
    \text{Var}(\hat{T}_0(k)) \approx c_1 \sigma_H^8 + c_2 \sigma_H^6 \sigma_N^2 + c_3 \sigma_H^4 \sigma_N^4 + c_4 \sigma_H^2 \sigma_N^6,
\end{equation}
where \( c_1 \), \( c_2 \), \( c_3 \), and \( c_4 \) are constants that do not depend on the channel or noise realizations. 
As shown in \eqref{eq:31}, the variance of the RFF estimate is affected to some extent by the propagation channel conditions. Nevertheless, our subsequent experimental results demonstrate that, even under such influence, the proposed scheme is still able to achieve a satisfying recognition accuracy.

\section{Proposed RFF Identification Framework in SIMO Systems}

In this section, we propose an RFF identification framework that leverages the RFF extraction method in Section \ref{Sec:III}. As shown in Fig.~\ref{fig:system_overview}, our proposed method consists of the training stage and the inference stage.

\begin{figure}[h]
    \centering  
    \includegraphics[width=0.8\linewidth]{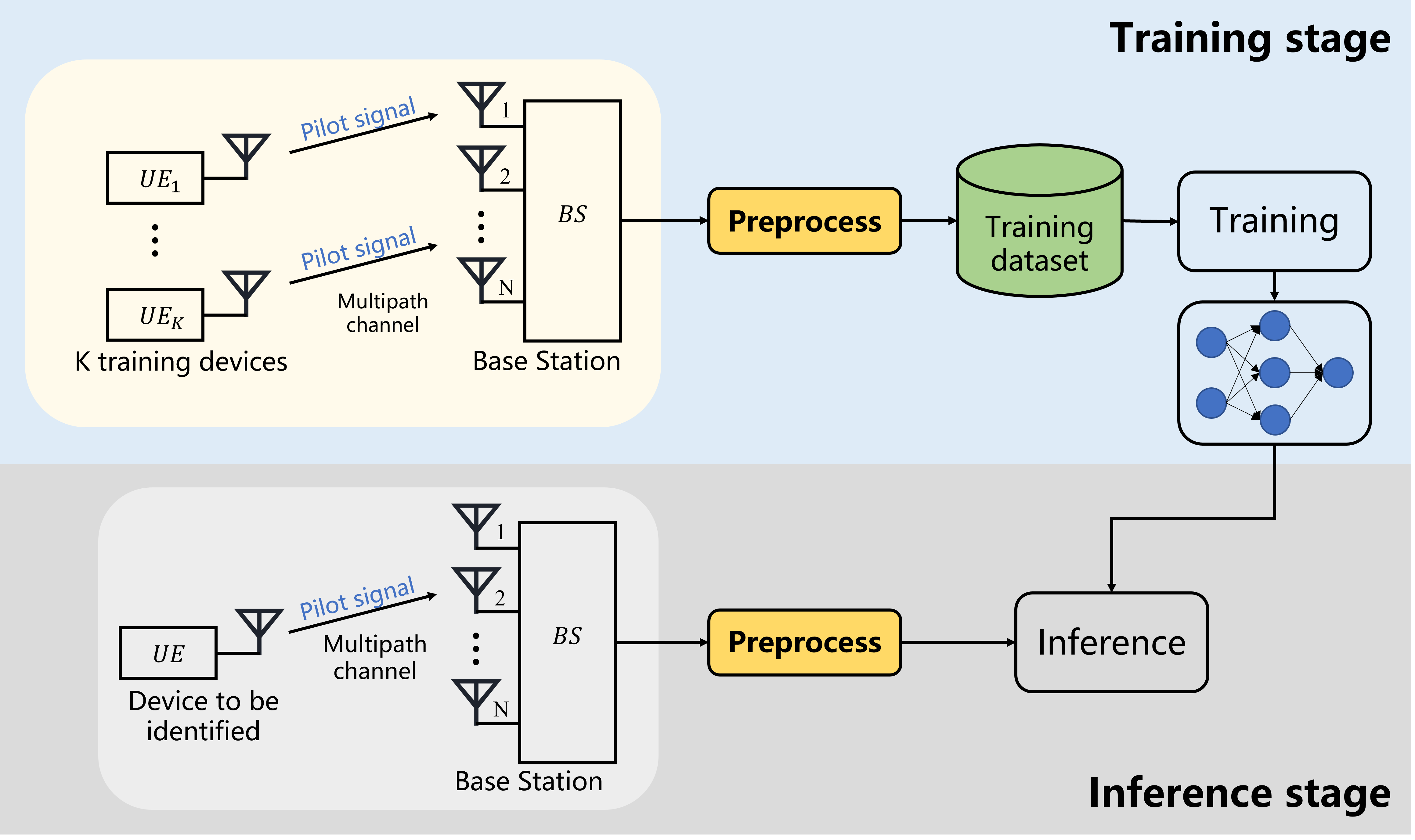}
    \caption{System overview.}
    \label{fig:system_overview}
\end{figure}

\subsection{RFF Learning Pipeline}
The training stage of the proposed method includes the extraction and preprocessing of UE’s RFF and the training of the machine learning model.

\subsubsection{Extraction and Preprocess of RFF}

The communication is conducted under the 5G framework, with \( K \) training devices transmitting pilot signals to the BS. The BS employs Algorithm~\ref{alg:ext} to extract the channel-robust RFFs of the individual devices. Upon completion of the RFF extraction, the real and imaginary parts of RFF are concatenated into a new array, which is subsequently employed for training.

\subsubsection{Model Training}

A convolutional neural network (CNN) is used to perform hierarchical RFF feature learning, which captures nonlinear high-dimensional signal relationships through stacked transformations \cite{9451544}.

\subsection{UE Identification via RFF Inference} \label{Sec:IV-B}

The UE to be identified transmits a pilot signal to BS. BS then extracts RFF using the received signals and feeds it to the pretrained model for identification.

\subsubsection{Extraction and Preprocess of RFF}

Similarly, in the inference stage, Algorithm~\ref{alg:ext} is utilized to extract the RFF of the UE for inference.

\subsubsection{Inference}

The BS puts the extracted RFF into the pre-trained CNN to determine the UE’s identity. The predicted label will be compared with the actual label to evaluate the performance of the proposed method.

\subsubsection{Identification Latency}

Fig. \ref{fig:latency} decomposes the air interface delay into terminal processing $T_{UE}$, frame alignment $T_f$, transmission time interval (TTI) $T_{TTI}$, and RFF identification $T_{RFFI}$, from which, it can be seen that the proposed scheme completes RFF extraction through only a single pilot transmission, eliminating the ms-scale transmission delay inherent in feedback-based approaches.

\begin{figure}[h]
    \centering  
    \includegraphics[width=0.8\linewidth]{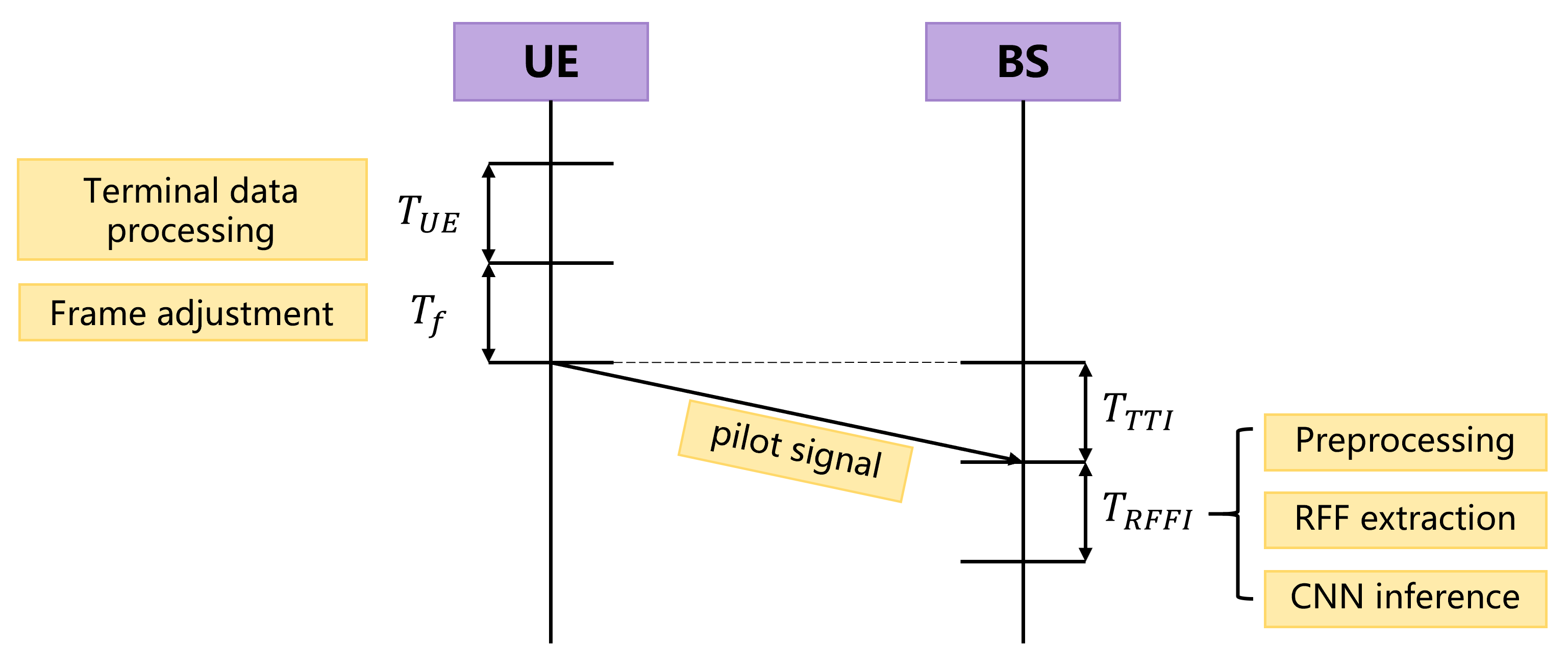}
    \caption{Air interface delay components.}
    \label{fig:latency}
\end{figure}

\section{Simulation Results}

In this section, we evaluate the performance of the proposed RFF method in terms of classification accuracy and identification latency through MATLAB simulations.

\subsection{Simulation Setup}


We conduct simulations using the MATLAB 5G NR Toolbox, modeling 30 UE devices. For each device, we collect 9000 frames across an SNR range from -10 dB to 30 dB for training and 900 frames for inference. CNN is employed to train and test the collected RFF data. 

\subsubsection{Transmitter and Receiver}



Assuming that the UE devices are single-antenna terminals and the BS is a multi-antenna base station, the BS and UE devices are configured as the RFF modeling in Section \ref{Sec:II}.

\subsubsection{Channel}


The TDL channel model specified in 3GPP TR 38.901 \cite{3GPPTR38901} is employed for simulation, configured by setting the parameters, such as \texttt{PathDelays} and \texttt{AveragePathGains}. In the simulation of wideband multipath environments, the carrier frequency $f_c$ is set to 10 GHz, the subcarrier spacing $SCS$ is set to 60 kHz, and the bandwidth $B$ is set to 10 MHz.

\subsubsection{CNN}

The CNN hierarchically learns RFF features via two cascaded blocks, each comprising a 3×3 convolutional layer (32/64 filters), batch normalization, ReLU, and 2×2 max-pooling (stride=2). Flattened features pass through a softmax-activated dense layer for 30-class prediction. Adam optimizer with 1e-3 learning rate and 32 batch size is adopted, combined with early stopping (5-epoch patience) on 20\% validation data to prevent overfitting.

\subsection{Identification Accuracy Results}


Fig.~\ref{fig:acc_multipath} demonstrates identification accuracy under varying multipath conditions. The proposed scheme achieves over 91\% accuracy when SNR exceeds 20 dB, showing robust performance against multipath effect. Notably, lower accuracy occurs in flat channels, such as 4-path channels, as RFF extraction via dual antennas requires pronounced channel differences to stabilize features; minimal differences amplify noise interference. Furthermore, accuracy does not monotonically increase with multipath components, as the 24-path channel yields lower accuracy than 20-path. 

\begin{figure}[h]
    \centering  
    \includegraphics[width=0.8\linewidth]{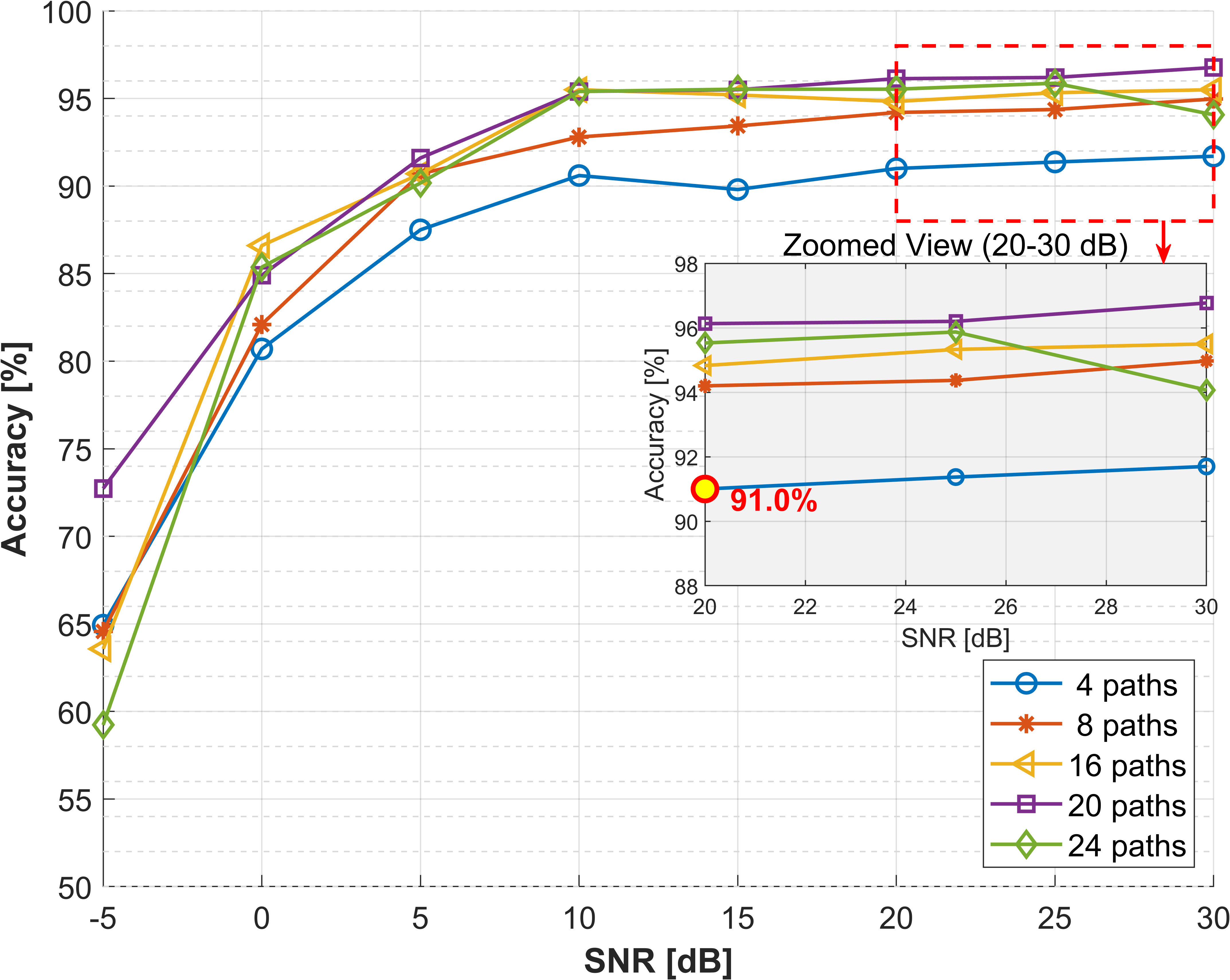}
    \caption{Identification accuracy under varying numbers of multipath components, with the frequency band divided across 16 subcarriers.}
    \label{fig:acc_multipath}
\end{figure}

Fig.~\ref{fig:acc_subcarrier} shows the identification accuracy across different numbers of subcarriers for frequency band segmentation. The scheme achieves 96.43\% accuracy at SNR over 20 dB across 4 subcarriers, as narrower sub-bands yield smaller differences between the estimated and true mean value of CFRs within sub-bands, which enhance identification accuracy.

\begin{figure}[h]
    \centering  
    \includegraphics[width=0.8\linewidth]{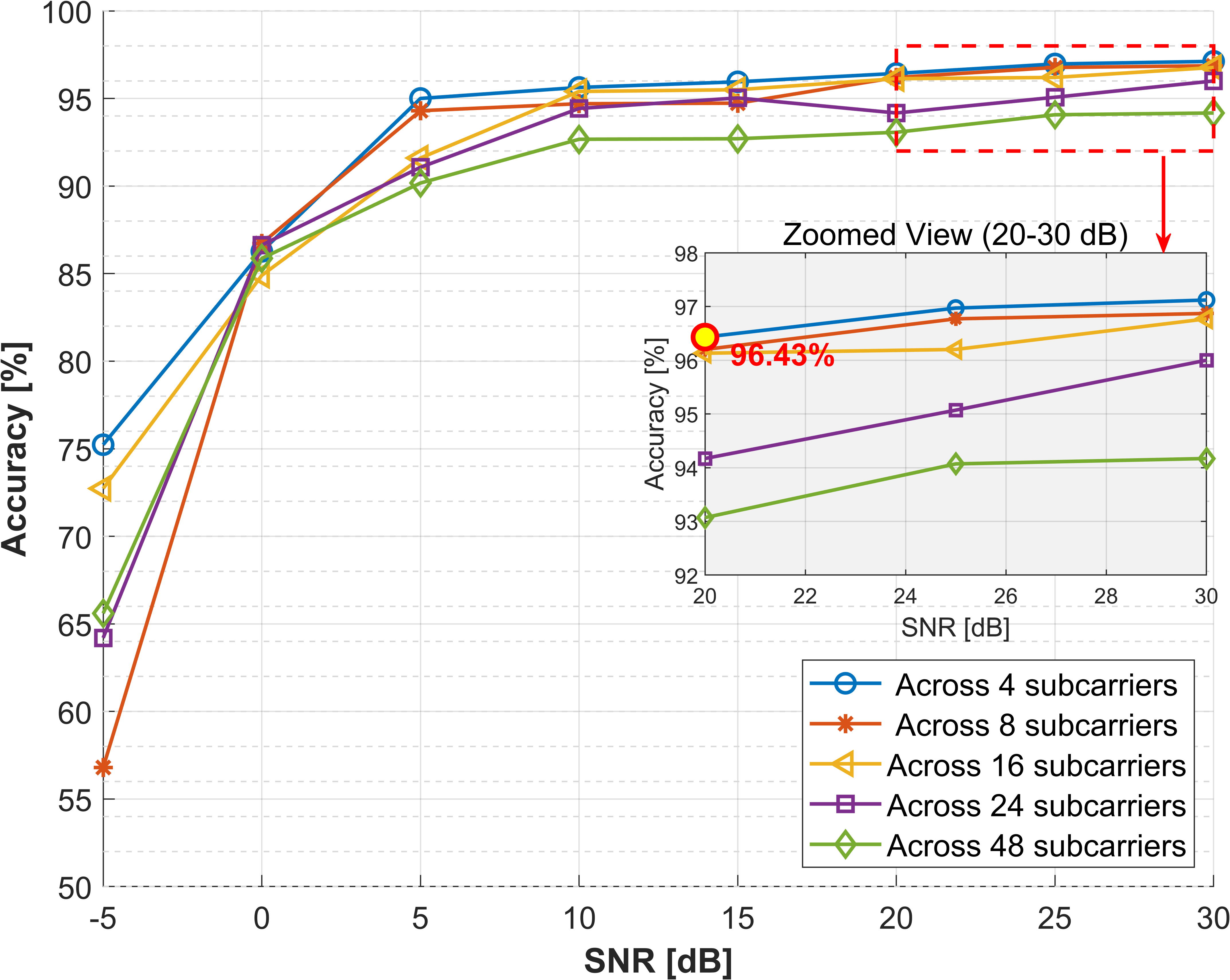}
    \caption{Identification accuracy across different numbers of subcarriers under the 20-path channel.}
    \label{fig:acc_subcarrier}
\end{figure}

As shown in Fig.~\ref{fig:acc_device}, the identification accuracy of the proposed scheme can achieve 98.2\%, when the number of UEs is less than 20 and the SNR exceeds 15 dB. As the number of UEs increases, the identification accuracy declines.
When the number of UEs is 30 and the SNR is greater than 10 dB, the identification accuracy can reach above 95.4\%, thereby demonstrating the satisfactory classification performance for 30 or fewer UEs.

\begin{figure}[h]
    \centering  
    \includegraphics[width=0.8\linewidth]{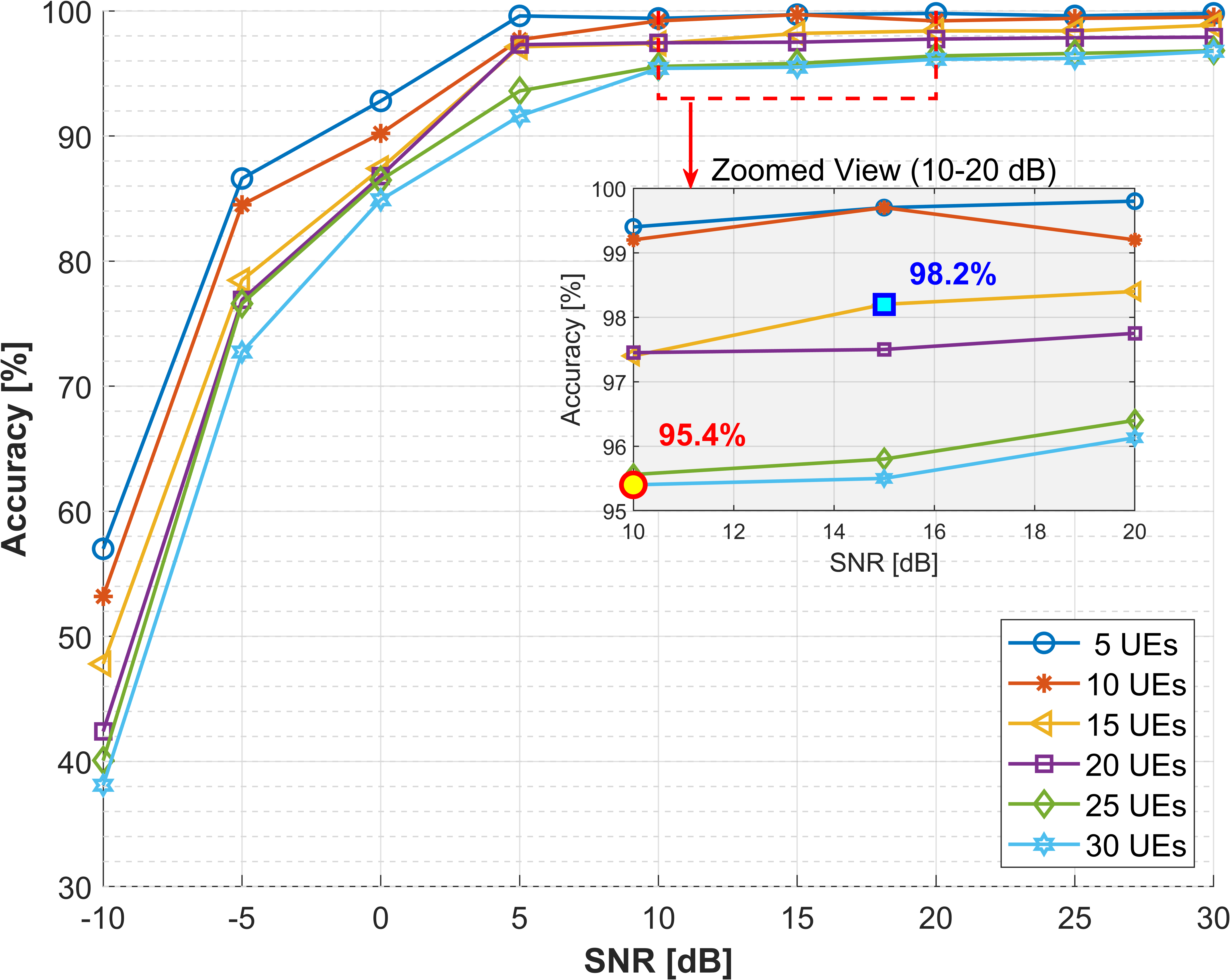}
    \caption{Identification accuracy of the different number of UEs under the 20-path channel, with the frequency band divided across 16 subcarriers.}
    \label{fig:acc_device}
\end{figure}

Fig.~\ref{fig:acc_compare} compares identification accuracy across methods, including the logarithmic-domain DoLoS scheme \cite{9979789}, Benchmark1 and Benchmark2. Benchmark1 directly processes raw IQ data through CNN, while Benchmark2 adds preprocessing without frequency band division. At 25 dB SNR, the proposed method outperforms all baselines with 96.2\% accuracy, compared to 61.83\% of Benchmark1, 62.4\% of Benchmark2, and 83.67\% of DoLoS. This demonstrates Benchmark1's sensitivity to multipath distortion and Benchmark2's limitation from CFR fluctuation. Although DoLoS shows improved robustness, its logarithmic-domain operation degrades accuracy for partial RFF feature loss.

\begin{figure}[h]
    \centering  
    \includegraphics[width=0.8\linewidth]{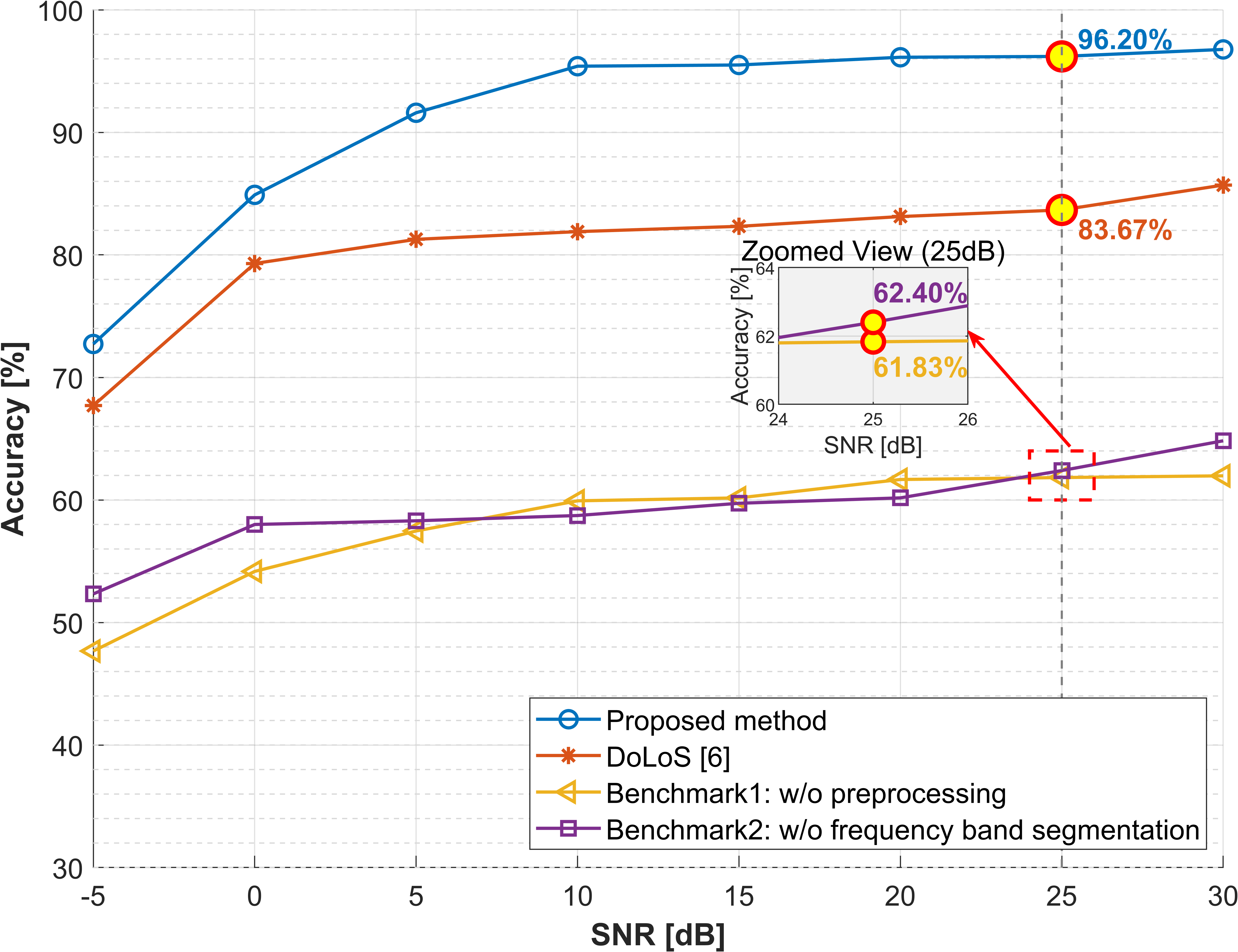}
    \caption{Identification accuracy of different methods under 20-path channel.}
    \label{fig:acc_compare}
\end{figure}

\subsection{Identification Latency Estimation }

To evaluate the low-latency characteristic, we measure the device identification time. However, system, interpreter, and resource contention skew the MATLAB measurement results. Therefore, we estimate the theoretical latency based on the Roofline model \cite{roofline}. 
Following the air interface delay components presented in Sec \ref{Sec:IV-B}, 
the simulation program is partitioned into code blocks. 
Each block is classified as either compute-bound or memory-bound using the Roofline model, and the corresponding execution times are estimated.


Based on the hardware parameters of the test computer, the computed peak performance is $Peak\,FLOPS = 403.2\,\mathrm{GFLOPS}$ and $Memory\,Bandwidth = 17.06\,\mathrm{GB/s}$.
Table \ref{tab:simu} lists the total arithmetic operations and memory access required by the UE and the BS for their respective data-processing stages. From these values, we estimate  
$T_{UE} = \SI{34.64}{\micro\second}$ and  
$T_{RFFI} = \SI{81.04}{\micro\second}$.  
Furthermore, with the simulation configured at $SCS = 60\,\mathrm{kHz}$, we obtain $T_{TTI} = 0.25\,\mathrm{ms}$ and $T_f = 0.125\,\mathrm{ms}$.  
Collectively, the overall theoretical latency is  
$T_{air} = T_{UE} + T_f + T_{TTI} + T_{RFFI} \approx 0.491\,\mathrm{ms} < 1\,\mathrm{ms}$,  
thereby satisfying the URLLC latency requirement specified in ITU-R IMT-2020 \cite{ITU2017b}.


\begin{table}[htbp]
  \centering
  \caption{The computational workload and the memory access of the simulation program}\label{tab:simu}
  \begin{tabular}{lcc}
    \toprule
    \textbf{Program block} & \textbf{Total FLOPS (FLOPS)} & \textbf{Total Bytes (Bytes)} \\
    \midrule
    UE side & $7.34 \times 10^5$ & $5.91 \times 10^5$ \\
    BS side & $1.37 \times 10^6$ & $1.39 \times 10^6$ \\
    \bottomrule
  \end{tabular}
\end{table}

\section{Conclusion}


This paper proposes a novel LLDR-based RFF identification scheme to tackle multipath distortion without additional delay. By utilizing co-temporal CFR measurements from multiple antennas and incorporating sub-band processing, the method effectively preserves device-specific RFF under minimal channel variation. Simulation results confirm that the proposed method enables effective device identification under multipath channel conditions while fully complying with ultra-low latency requirements.

\begin{appendices}

\end{appendices}

\footnotesize
\bibliographystyle{IEEEtran}
\bibliography{references}

\vfill

\end{document}